\newcommand{\Ac}{{\mathcal{A}}}
\newcommand{\Ic}{{\mathcal{I}}}
\newcommand{\Pc}{{\mathcal{P}}}
\newcommand{\Tc}{{\mathcal{T}}}
\newcommand{\Xc}{{\mathcal{X}}}
\newcommand{\Nbb}{{\mathbb{N}}}
\newcommand{\Rbb}{{\mathbb{R}}}
\newcommand{\Zbb}{{\mathbb{Z}}}
\newtheorem{definition}{Definition}
\newtheorem{corollary}{Corollary}
\newtheorem{theorem}{Theorem}
\newcommand{\known}[1]{\mbox{$#1^{\text{obs}}$}}
\newcommand{\unknown}[1]{\mbox{$#1^{\text{unobs}}$}}
\newcommand{\re}[1]{\mbox{$#1^{'}$}}
\newcommand{\coeffsubtensorScaled}{Z}
\newcommand{\coeffsubtensorscaled}{z}
\newcommand{\coeffsubtensorAdd}{T}
\newcommand{\TCA}{\mbox{TCA}}
\newcommand{\MCA}{\mbox{MCA}}
\tikzset{>=latex}
\pgfplotsset{compat=1.16}
\title{A Simple and Scalable Tensor Completion Algorithm\\via Latent Invariant Constraint\\for Recommendation System}
\author{
  Tung Nguyen \\
  University of Missouri - Columbia\\
  \texttt{tdn84d@mail.missouri.edu} \\
  \And
  Sang T. Truong \\
  Stanford University \\
  \texttt{sttruong@cs.stanford.edu} \\
  \And
  Jeffrey Uhlmann \\
  University of Missouri - Columbia\\
  \texttt{uhlmannj@missouri.edu} \\
}
\begin{document}

\maketitle

\begin{abstract}
In this paper we provide a latent-variable formulation and solution to the recommender system (RS) problem in terms of a fundamental property that any reasonable solution should be expected to satisfy. Specifically, we examine a novel tensor completion method to efficiently and accurately learn parameters of a model for the unobservable personal preferences that underly user ratings. By regularizing the tensor decomposition with a single latent invariant, we achieve three properties for a reliable recommender system: (1) uniqueness of the tensor completion result with minimal assumptions, (2) unit consistency that is independent of arbitrary preferences of users, and (3) a consensus ordering guarantee that provides consistent ranking between observed and unobserved rating scores. Our algorithm leads to a simple and elegant recommendation framework that has linear computational complexity and with no hyperparameter tuning. We provide empirical results demonstrating that the approach significantly outperforms current state-of-the-art methods.
\end{abstract}

\section{Introduction}
The emphasis on user-specific latent traits has driven the algorithmic development of recommendation systems over the last decade \cite{RS4, RS5, RS3,  RS2, RS6, RS1}. As these latent variables are highly complex in structure for very large datasets, providing a rigorous analysis of the latent vectors based on data relating users, products, and features can be expected to illuminate salient structures. A recommendation system (RS) is intended to infer different user evaluations of products to accurately predict how a given user is likely to evaluate a new product \cite{IRT1}. To achieve this goal, the RS must extrapolate a model that captures the relationships among observed evaluations of products by a given user to those of different users \cite{IRT2}. The key is to ensure that inferences are derived from a common latent space that is fundamental to the evaluation process of different users.  

In this paper we propose unit consistency (UC) as a latent invariant that is common to user evaluations. UC presumes that user evaluations will be invariant up to an arbitrary choice of positive units applied to a set of incommensurate variables defining the state of a system \cite{IRT3}. For example, consider state vector $x$ for a system defined with lengths in meters and rotations in radians per minute, and another state vector $y$ for the same system but with lengths in centimeters and rotations in radians per second. Unit consistency implies that for a nonlinear transformation $\Tc$, $\Tc y$ gives the same result as $\Tc x$ but in the units of $y$. Intuitively, the choice of units should not qualitatively affect the result of a transformation, just the units in which the result is expressed \cite{UC2}. UC fits into the context of RS through the following demonstration. In the context of RS, each user can be presumed to evaluate each product based on an implicit set of personal units applied to each attribute \cite{IRT4}. The RS cannot possibly know the full set of attributes each user evaluates in the rating of each product, let alone infer the units each user applies with respect to each of those attributes \cite{IRT5}. Even a user would unlikely be able to reflect on and identify the precise set of variables that led to their specific rating for a particular product \cite{RS7}. Thus, at a minimum, a RS should be expected to compute ratings in a manner that provides UC with respect to such unknowable units. In other words, the RS should implicitly model user ratings as deriving from an underlying model in which ratings from different users can be explained in terms of each having their own individual set of units that relate to their individual personal preferences \cite{RS7, IR6}.
 
In this paper, we show that regularizing tensor completion algorithm by a small number of latent invariants is sufficient to yield unique and accurate recommendations underpinning a reliable RS. To acquire the solutions satisfying the UC constraints, we define an optimization framework that guarantees uniqueness of normalized tensor and tensor completion results with minimal assumptions. Furthermore, we show that results from our method satisfy consensus ordering property.  Experimental tensor-completion results in both 2D and 3D demonstrate significant performance improvements over prior state-of-the-art recommender system algorithms.

\section{Related Works}
Computationally efficient tensor completion can be formulated as an optimization problem \cite{Tractable1, Tractable2}, where nuclear norm minimization is employed to enforce unitary invariance. To tackle this problem,~\cite{Recht, Boaz, Aaron} utilized singular value decomposition (SVD) method to achieve state-of-the-art probabilistic bounds on the minimum number of entries sufficient for retrieval. Other methods, such as collaborative filtering~\cite{Daniel}, parameter-decrease~\cite{Guangxiang}, Scale Invariant Feature Transform~\cite{Lowe, Gihwi}, and social choice theory~\cite{David}, exploit unitary invariance indirectly via the Moore-Penrose pseudoinverse. However, some applications might require desirable properties beyond unitary invariance, such as physics or robotics \cite{UC1, UC2, UC3}. For example, we might want invariance with respect to the choice of units on key variables \cite{UC1} or with respect to global rotations of the system coordinate \cite{UC2}.

In this paper, problem-specific invariants are identified and enforced to be fundamental, and relaxation of a critical invariant for computational efficiency is only considered as a last resort. Our algorithm takes a simpler iterative structure from the tensor completion procedure and then deterministically generates recommendation scores without consideration of the rank of the tensor. \cite{Anandkumar2014} considered tensor completion that is applicable for a sparse-coding model in the overcomplete regime, where the tensor rank is larger than the dimension. Our work reduces the complexity involved within the parameter space of the tensor rank and allows a fast convergence rate. To estimate latent features in the complex parameter space,~\cite{OwenZhou2000, FangLi2021} also examine a theoretical gradient-based method procedure to estimate the latent parameters. With minimal assumptions, we utilize a convex optimization model based on~\cite{CSA} to retrieve the latent vectors and provide recommendation results directly.

\section{Method}
\subsection{Problem Set Up}
We consider the following RS as a running example. A recommender interacts with the user $i$-th whose feature vector $x_i \in \Xc \subset \Rbb^{d_x}$ is assumed to be determined by an unknown but fixed latent variable $z_i$. For a recommendation framework $A$, we denote the rating of user $u_i$ for product $p_j$ as $r_{u_i, p_j}$. For user $i$, the recommender has access to that user's rating by indexing into a row of a tensor $A$ with ratings as $r_{u_i, p_1}, r_{u_i, p_2}, ..., r_{u_i, p_{P_i}}$ of $P_i$ products $\{p_1, ..., p_{P_i} \} \subseteq \Pc$. Each movie $p_j$ in the set $\Pc$ is characterized by a vector $p_j \in \Pc \subset \Rbb^{d_p}$~\footnote{We overload the index of the movie and its features}. Given a user $i$ with a fixed latent variable $z_i$, we assume that there is a function $f$ parameterized by $z_i$ relating user features and movie features to rating (Figure~\ref{graphical-model}):
\begin{equation}
     r_{u_i,p_j} = f(x_i, p_j; z_i) = A[u_i, p_j, x_i[0], ..., x^i[d_x], p_j[0], ..., p_j[d_p]]
\end{equation}
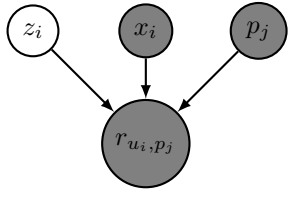
\begin{figure}[t]
    \centering
    \begin{tikzpicture}[
            node distance={15mm}, thick, 
            observed_node/.style={draw, circle, fill=gray},
            unobserved_node/.style={draw, circle},
        ]
        
        \node[unobserved_node]  (1)              {$z_i$}; 
        \node[observed_node]    (2) [right of=1] {$x_i$}; 
        \node[observed_node]    (3) [right of=2] {$p_j$};
        \node[observed_node]    (4) [below of=2] {$r_{u_i,p_j}$} ; 
    
        \draw[->] (1) -- (4);
        \draw[->] (2) -- (4);
        \draw[->] (3) -- (4);
    \end{tikzpicture}
    \caption{Graphical demonstration of data generating process. Shaded and unshaded nodes are observable unobservable variables, respectively.}
    \label{graphical-model}
\end{figure}

The RS aims to recommend a previously-unrated product $j$ to user $i$ that is most likely to give a maximum rating $r_{u_i, p_j}$. Toward this goal, we learn the relationship $f(x_i, p_j; z_i)$ to predict an unrated movie through tensor completion.

Our data has the form of a tensor $A\in \Rbb^{I_1 \times ... \times I_D}$\footnote{ Without loss of generality, we assume for convenience that $A$ is strictly positive.}, and we index $A$ with index vector $i = [i_1, ..., i_D]$:
\begin{equation}
    A[i]  = A[i_1, ..., i_D], \text{ where } i_d \in [D] \cup \emptyset \text{ and } [D] = \{n \in \Nbb: n < D \}
\end{equation}
When $i_d = \emptyset$, we access all elements in dimension $d$ of $A$. Let $\Ic$ be a set of all vectors $i$ with length $D$ that has $k$ non-null components. A $k$-dimensional subtensor $A[i] \in \Rbb^{I_{d_1}\times\dots\times I_{d_k}}$ for $i\in \Ic$ spans all the $k$ null components of $i$ at dimensions $d_1, \dots, d_k$. Let $\Ac_\Ic = \{A[i] | i \in \Ic\}$ be the set of $k$-dimensional subtensors of $A$. If vector $i'$ with $D$-non-null elements satisfies that $A[i']$ is an element of subtensor $A[i]$, we denote $i' \in A[i]$. We define the set of observed and unobserved entries of $A$ as:
\begin{equation}
     \known{A}= \{i \mid A[i] \neq 0\} \text{ and } \unknown{A} = \{i \mid A[i] = 0\} 
\end{equation}
Let $\coeffsubtensorScaled_k \in \Rbb^{|\Ic|}$ be a strictly positive-valued vector of length equal to the number of $k$-dimensional subtensors of $A$. Then $\coeffsubtensorScaled_{k, A[i]}$ is an element of $\coeffsubtensorScaled_k$ associated with subtensor $A[i]$. We provide a more rigorous concept of a latent variable extraction of a tensor $A$. The product $\re{A}=\coeffsubtensorScaled_k *_k A$ is defined as a latent scaling of each $k$-dimensional subtensor $A[i]$ of $A$ to $\re{A}[i]$ of $\re{A}$ as  $\re{A}[i] = \coeffsubtensorScaled_{k, A[i]}\cdot A[i]$.

For example, when $D=3$ and $k=1$:
\begin{equation}
\begin{aligned}
    \Ic & = \bigcup_{[i_1, i_2, i_3]} \{[i_1, i_2, \emptyset], [\emptyset, i_2, i_3], [i_1, \emptyset, i_3]\} \\
    \Ac_\Ic &= \bigcup_{[i_1, i_2, i_3]}\{A[i_1,i_2,\emptyset], A[\emptyset,i_2,i_3], A[i_1,\emptyset,i_3]\}
\end{aligned}
\end{equation}

Our objective is to learn latent variable $z_k$ from $A$ via tensor decomposition. Specifically, we seek the $D$-dimensional tensor $\re{A} \in \Rbb^{I_1\times \cdots \times I_D}_{+}$ and a positive vector $\coeffsubtensorScaled_k$ such that
\begin{equation}
\begin{aligned}
    \re{A} & = A *_k  \coeffsubtensorScaled_k \\
    \prod_{i\in \re{A}\known{ [i']}}\re{A}[i] & = 1, \quad \forall \re{A}[i'] \in \re{A}_{\Ic}
\end{aligned}
\label{constraints}
\end{equation}
meaning that the product of the observed entries of each $k$-dimensional subtensor $\re{A}[i']$ is 1. 

An equivalent objective is to learn the log latent variable. For notational convenience, let $A$ also be the logarithm conversion of $A$, i.e., all known entries are replaced with their logs. Specifically, we seek the $D$-dimensional tensor $\re{A}$ and a latent vector $\coeffsubtensorscaled_k$ such that
\begin{equation}
    \re{A}[i] \equiv A[i] + \sum\limits_{i' \in \Ic:i\in A[i']} \coeffsubtensorscaled_{k,A[i']}, \quad \forall i \in \known{A}  \text{ where } \sum_{i\in \re{A}\known{ [i']}}\re{A}[i] = 0, \quad  \forall \re{A}[i'] \in \re{A}_{\Ic}
\end{equation}
meaning the sum of observed entries of $k$-dimensional subtensor $\re{A}[i']$ equals zero. We conduct the learning process in log latent purely to improve the numerical properties of our algorithm.

\subsection{Algorithms}
We present Algorithm 1 for learning latent variables, with the details of the correctness and efficiency of the algorithm  in~\cite{CSA}, where the authors proved its equivalence to a convex optimization model, with optimal time complexity that is linear in the number of observed entries, $O\left(|\known{A}| \right)$. This time complexity is valid even when we encounter data sparsity as $|\known{A}| \ll |\unknown{A}|$ and is unaffected by the final step of converting back from the log-space solution to the desired solution. Here we present our Learning Latent Invariance algorithm (LLI) as Algorithm~\ref{LLI} and the tensor completion algorithm (TCA) in Algorithm~\ref{TCA}.
\begin{algorithm}[t]
    \DontPrintSemicolon
    \KwInput{$D$-dimensional tensor \textit{A}.}
    \KwOutput{$\re{A}$ and scaling vector $\coeffsubtensorScaled_k$.}
    
    \SetKwFunction{FMain}{LLI}
    \SetKwProg{Fn}{Function}{:}{}
    \Fn{\FMain{$A, k$}}{
    \textbf{- \texttt{Step 1}: Iterative step over constraints}: Initialize $count \leftarrow 0$, variance variable $v\leftarrow 0$, and let $\rho$ be a zero vector of conformant length. For notational convenience, let $A$ also be the logarithm conversion of $A$, i.e., all known entries are replaced with their logs.
    \\
    \For{each subtensor $A[i']\in A_{\Ic}$}{
        \begin{equation}
        \begin{aligned}
            \rho_{i'} &= -\left[ |\known{A[i]}|\right]^{-1} 
            \sum\limits_{i \in \known{A[i']}}\hspace{-4pt} A[i]
            \\
            A[i] & \leftarrow A[i] + \rho_{i'},\, v\leftarrow v+\rho_{i'}^2, \quad \text{for} \quad i \in \known{A[i']}
            \\
            \coeffsubtensorscaled_{k, A[i']}&\leftarrow \coeffsubtensorscaled_{k, A[i']} + \rho_{i'}
        \end{aligned}
        \end{equation}
    }
    \textbf{- \texttt{Step 2}: Convergence}: If $v$ is less than a selected threshold $\epsilon$, then exit loop. Otherwise, set $count \leftarrow count+1$ and return to step 1. 
    
    \KwRet $\re{A} \leftarrow exp \left(A\right)$ and $\coeffsubtensorScaled_k \leftarrow  exp(\coeffsubtensorscaled_k)$.\;}
    \caption{Learning Latent Variables of Tensor $A$}
    \label{LLI}
\end{algorithm}

\begin{algorithm}[t]
    \DontPrintSemicolon
    \KwInput{$D$-dimensional tensor A and $k$.}
    \KwOutput{$\re{A}$}
    \SetKwFunction{FMain}{TCA}
    \SetKwProg{Fn}{Function}{:}{}
    \Fn{\FMain{$A, k$}}{
        \textbf{- \texttt{Step 1}: LLI process.} \\
        $\coeffsubtensorScaled_k \leftarrow LLI(A, k)$ \\
        \textbf{- \texttt{Step 2}: Tensor completion process}: \\
        $\re{A} = A$ \\
        \For{$i \in \unknown{A}$}{
            $\re{A}[i] \leftarrow \prod_{i' \in \Ic: i \in A[i']} \coeffsubtensorScaled_{k,A[i']}^{-1}$ .\;
        }
      \KwRet $\re{A}$.
    }
    \caption{Tensor Completion Algorithm}
    \label{TCA}
\end{algorithm}

\paragraph{A Special Case of Tensor Completion Algorithm} In the case of $D$=$2$ when $A$ is a 2D tensor, i.e., $LLI(A, k$=$1)$, the set of $k$=$1$ subtensors is simply the set of rows and columns. This is a specialized instance of a tensor problem studied in~\cite{RZalgorithm} with all of its dimensions explicitly distinguished for the problem of scaling line products of a tensor to chosen positive values (for which a solution is not guaranteed to exist except in the case we use of all scaling values equal to $1$). For notational convenience, we define the completion function $\MCA(A)$ as a special case of TCA:
\begin{equation}
       \MCA(A) ~\equiv ~ \TCA(A,1) ~~ \mbox{for $D=2$}\, .
\end{equation}
The time and space complexity for $\MCA(A)$ is $O(|\known{A}|)$. Both TCA and MCA directly yield unit-consistent completion algorithms. Here we examine their properties when applied to the RS problem.

\subsection{Theoretical Analyses}
\subsubsection{Uniqueness}
\begin{theorem}\textbf{Uniqueness of $\re{A}$ in Learning Latent Invariance Algorithm:}
    There exists a unique tensor $\re{A}$ for which there exists a strictly positive latent vector $\coeffsubtensorScaled_k$ such that the solution $\re{A} \leftarrow LLP(A, k)$ is unique and and $\known{\re{A}} = \known{A}$. Furthermore, if a positive vector $\coeffsubtensorAdd_k$ satisfies
    \begin{equation}
        \prod\limits_{i' \in \Ic:i \in A[i']}\hspace{-4pt}  \coeffsubtensorAdd_{k, A[i']}  ~= ~1 \qquad \forall i \in \known{A},
    \end{equation} then $\left(\re{A}, \coeffsubtensorScaled_k \circ \coeffsubtensorAdd_k\right)$ is also a solution, where $\circ$ is the Hadamard product.
    \label{uniquess_A^{re}}
\end{theorem}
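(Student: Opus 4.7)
My plan is to work in logarithmic coordinates, where the multiplicative product constraint becomes linear. Let $B$ be the $|\known{A}| \times |\Ic|$ incidence matrix with $B_{i,i'} = 1$ if $i \in A[i']$ and $0$ otherwise, set $M = B^{T} B$, and let $b \in \Rbb^{|\Ic|}$ have entries $b_{i'} = -\sum_{i \in \known{A[i']}} A[i]$. The requirement that every subtensor's observed log-entries sum to zero is equivalent to the symmetric linear system $M \coeffsubtensorscaled_k = b$; correspondingly, on observed entries $\re{A}[i] = A[i] + (B \coeffsubtensorscaled_k)_i$. The theorem then becomes a statement about the solution set of this single linear system.

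For existence, observe that $b = -B^{T} A^{\mathrm{obs}}$ automatically lies in $\mathrm{range}(B^T) = \mathrm{range}(M)$, so $M \coeffsubtensorscaled_k = b$ is consistent. The LLI iteration performs, one subtensor at a time, the exact scalar shift $\rho_{i'}$ that drives that subtensor's residual to zero; this is cyclic coordinate descent (equivalently, cyclic affine projection onto the hyperplanes $\{\coeffsubtensorscaled : (M\coeffsubtensorscaled)_{i'} = b_{i'}\}$) on the associated convex quadratic, and by the convex-optimization equivalence established in~\cite{CSA} it converges to some solution $\coeffsubtensorscaled_k^{*}$. Positivity of $\coeffsubtensorScaled_k = \exp(\coeffsubtensorscaled_k^{*})$ is immediate, and $\known{\re{A}} = \known{A}$ follows because no update ever touches entries outside $\known{A}$.

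For uniqueness of $\re{A}$, suppose $\coeffsubtensorscaled_k$ and $\coeffsubtensorscaled_k^{\prime}$ both solve the system. Their difference lies in $\ker M$, and since $M = B^{T} B$ we have $\ker M = \ker B$, so
\[
\sum_{i' \in \Ic:\, i \in A[i']} \bigl(\coeffsubtensorscaled_{k, A[i']}^{\prime} - \coeffsubtensorscaled_{k, A[i']}\bigr) = 0 \qquad \forall\, i \in \known{A}.
\]
Because $\re{A}[i]$ on observed entries depends on $\coeffsubtensorscaled_k$ only through this linear combination, both solutions produce the same $\re{A}$, and exponentiating gives the asserted uniqueness. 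The same null-space analysis characterizes the remaining freedom: writing $\coeffsubtensoradd_k = \log \coeffsubtensorAdd_k$, the translate $\coeffsubtensorscaled_k + \coeffsubtensoradd_k$ remains a solution iff $\coeffsubtensoradd_k \in \ker B$, which multiplicatively reads $\prod_{i':\, i \in A[i']} \coeffsubtensorAdd_{k, A[i']} = 1$ for every $i \in \known{A}$, so $\coeffsubtensorScaled_k \circ \coeffsubtensorAdd_k$ is valid exactly under the stated hypothesis on $\coeffsubtensorAdd_k$. The only non-routine step is convergence of LLI to a feasible (not merely residual-reducing) point, which I expect to be the main obstacle and would discharge by invoking the convex-optimization equivalence in~\cite{CSA}; everything else reduces to the linear-algebraic identity $\ker M = \ker B$.
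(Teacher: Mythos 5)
Your proof is correct, and it takes a genuinely different---and more self-contained---route than the paper. The paper's own proof delegates the entire first claim (existence and uniqueness of $\re{A}$) to the cited reference \cite{CSA}, and then verifies the second claim by a one-line multiplicative computation: since $\prod_{i'\in\Ic:\, i\in A[i']}\coeffsubtensorAdd_{k,A[i']}=1$ for every $i\in\known{A}$, rescaling by $\coeffsubtensorAdd_k$ leaves every observed entry of $\re{A}$, and hence every product constraint, unchanged. You instead build an explicit linear model in log-space---the incidence matrix $B$, the normal equations $M\coeffsubtensorscaled_k=b$ with $M=B^{T}B$ and $b=-B^{T}A^{\mathrm{obs}}$, consistency from $b\in\mathrm{range}(B^{T})=\mathrm{range}(M)$---and the single identity $\ker M=\ker B$ does double duty: it gives uniqueness of $\re{A}$ across \emph{all} solutions of the constraint system (something the paper never argues directly), and it characterizes the residual gauge freedom as exactly the stated condition on $\coeffsubtensorAdd_k$, in fact as an ``if and only if,'' which is stronger than the theorem's one-directional claim. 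What you defer to \cite{CSA} is only that the cyclic coordinate-minimization in the LLI iteration converges to a feasible point; that matches the paper's own reliance on the reference but isolates it to the one genuinely algorithmic step. Two minor quibbles, neither affecting correctness: describing the coordinate update as an ``affine projection onto the hyperplane $\{\coeffsubtensorscaled : (M\coeffsubtensorscaled)_{i'}=b_{i'}\}$'' is loose, since the move is along the coordinate direction $e_{i'}$ and is therefore an oblique rather than orthogonal projection; and it is worth stating explicitly that the uniqueness you establish concerns $\re{A}$ restricted to $\known{A}$, the unobserved positions being untouched by the algorithm, which is precisely what the condition $\known{\re{A}}=\known{A}$ encodes.
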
{}
\begin{proof}
    The full proof of the first part is in the paper~\cite{CSA}. For the second part, we have for each $i \in \known{A}$,
    \begin{equation}
         \prod_{i'\in \Ic:i\in A[i']}A[i]\cdot\left(\coeffsubtensorScaled_{k, A[i']}\cdot\coeffsubtensorAdd_{k, A[i']}\right) = \prod_{i' \in \Ic:i\in A[i']}A[i]\cdot\coeffsubtensorScaled_{k, A(i')} 
         =  \re{A}[i] ,
    \end{equation}
    and 
    \begin{equation}
        \prod_{i\in \re{A} \known{[i']}}\re{A}[i] = 1, \quad \forall \re{A}[i'] \in \re{A}_{\Ic}
    \end{equation}
\end{proof}

Although the learned tensor is unique, the latent vectors $z_k$ and $TCA(A,k)$ may not be. We will be able to define a sufficient number and structure of known entries such that the tensor completion result, $TCA(A,k)$, is unique. We refer to a tensor with sufficient known entries to guarantee uniqueness as having {\em full support}. Definition \ref{fully-supported-tensor} formally defines our notion of full support.

\begin{definition} \textbf{Fully Supported Structure:}
    Given $A$, we define that tensor as {\em fully supported} if for every entry $i \in \unknown{A}$, there exist $2^{D}-1$ vectors $i' \in \known{A}$ such that for each dimension $d$ then $i'_d \neq \emptyset$ and has the form $i'_d = i_d + \Delta_d$, where $\Delta_d \in \{0, s_d\}$ for some fixed vector $s$ such that $s_d \in \Zbb_{>0}$.
    \label{fully-supported-tensor}
\end{definition}

Definition~\ref{fully-supported-tensor} says that every unknown entry forms a structure with $2^{D}-1$ known entries. Using this theorem, we obtain the following theorem regarding uniqueness of the recommendation/entry-completion result when there is sufficient data.

\begin{theorem} \textbf{Uniqueness of Tensor Completion Algorithm:}
    If $A$ is fully-supported, the result from $TCA(A, k)$ is uniquely determined even if there are distinct sets of latent vectors $\{\coeffsubtensorScaled_k\}$ that yield the same, unique, 
    $LLI(A, k)$.
\end{theorem}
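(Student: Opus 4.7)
The plan is to reduce the claim to a statement about the non-uniqueness group from Theorem~\ref{uniquess_A^{re}}, and then kill it with an inclusion--exclusion identity that is made possible precisely by the fully-supported structure of $A$.

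First I would observe the following. By Theorem~\ref{uniquess_A^{re}}, any two latent vectors that produce the same $LLI(A,k)$ output differ by a Hadamard factor $\coeffsubtensorAdd_k$ satisfying $\prod_{i' \in \Ic : i \in A[i']} \coeffsubtensorAdd_{k, A[i']} = 1$ for every $i \in \known{A}$. The TCA update at an unknown entry is $\re{A}[i] \leftarrow \prod_{i' \in \Ic : i \in A[i']} \coeffsubtensorScaled_{k, A[i']}^{-1}$, so replacing $\coeffsubtensorScaled_k$ by $\coeffsubtensorScaled_k \circ \coeffsubtensorAdd_k$ changes the value by the factor $\prod_{i' \in \Ic : i \in A[i']} \coeffsubtensorAdd_{k, A[i']}$. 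Therefore uniqueness of TCA reduces to showing that this product equals $1$ at every $i \in \unknown{A}$ as well. Taking logarithms and setting $\tau_{i'} := \log \coeffsubtensorAdd_{k, A[i']}$, and defining $C(j) := \sum_{i' \in \Ic : j \in A[i']} \tau_{i'}$ for an index $j$, the hypothesis becomes $C(j) = 0$ for all $j \in \known{A}$, and the goal becomes $C(i) = 0$ for all $i \in \unknown{A}$.

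Next I would exploit the fully-supported structure. Fix any $i \in \unknown{A}$; by Definition~\ref{fully-supported-tensor}, the $2^D-1$ indices $i+\Delta$ with $\Delta \in \{0,s_d\}^D \setminus \{0\}$ all lie in $\known{A}$, so $C(i+\Delta)=0$ for all nonzero $\Delta$. I claim the exact combinatorial identity
\begin{equation}
C(i) \;=\; \sum_{\Delta \in \{0,s_d\}^D \setminus \{0\}} (-1)^{|S(\Delta)|+1}\, C(i+\Delta),
\end{equation}
where $S(\Delta) := \{d : \Delta_d \neq 0\}$. Given this identity, the right-hand side vanishes term by term and we conclude $C(i)=0$, finishing the proof.

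To justify the identity I would exchange the order of summation. Recall that a $k$-dimensional subtensor of $A$ is indexed by a choice $F \subseteq [D]$ of $|F|=k$ free coordinates together with values for the remaining $\bar F := [D] \setminus F$ coordinates, and $i+\Delta$ lies in a subtensor $S_F(\cdot)$ whose value on $\bar F$ is $(i+\Delta)|_{\bar F}$. Thus $\tau_{S_F(i+\Delta)}$ depends on $\Delta$ only through $\Delta|_{\bar F}$. Grouping the nonzero $\Delta$'s by their restrictions $\delta := \Delta|_{\bar F}$, the inner sum over extensions to $F$ reduces, via the identity $\sum_{b=0}^{k}\binom{k}{b}(-1)^b = 0$, to the following clean coefficient count:
\begin{equation}
\sum_{\Delta \neq 0,\, \Delta|_{\bar F} = \delta} (-1)^{|S(\Delta)|+1}
= \begin{cases} 1 & \text{if } \delta = 0,\\ 0 & \text{if } \delta \neq 0.\end{cases}
\end{equation}
Summing over $F$ with $|F|=k$ then collapses the right-hand side of the claimed identity to exactly $\sum_{F : |F|=k} \tau_{S_F(i)} = C(i)$, as required.

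The main obstacle is the combinatorial bookkeeping in the last step: one must track which subtensors of which dimension contain $i+\Delta$ and verify that the alternating-sign weights on the $2^D-1$ known neighbors precisely recover $C(i)$ for every $k$, not just $k=1$. The $D=2$, $k=1$ case (MCA) gives a transparent sanity check where the identity $C(i_1,i_2) = C(i_1,i_2{+}s_2) + C(i_1{+}s_1,i_2) - C(i_1{+}s_1,i_2{+}s_2)$ can be verified by inspection, and the binomial cancellation above is the right generalization of that picture to arbitrary $D$ and $k$.
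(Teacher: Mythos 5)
Your proposal is correct and takes essentially the same route as the paper's proof: both reduce the claim to showing that the gauge factor $\prod_{i''\in\Ic:\, i\in A[i'']}\coeffsubtensorAdd_{k,A[i'']}$ equals $1$ at each unobserved $i$, and both establish this by a parity cancellation over the $2^D-1$ known neighbours $i+\Delta$ supplied by full support --- your binomial identity $\sum_{b=0}^{k}\binom{k}{b}(-1)^b=0$ is precisely the paper's count $|G_0\cap A[i'']|=|G_1\cap A[i'']|=2^{D-k-1}$. Your signed inclusion--exclusion packaging is somewhat cleaner (it needs no case split on $D\bmod 2$ and no separate handling of the neighbour $i+s$), but the underlying argument is identical.
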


\begin{proof}
For $A' = \TCA(A, k)$, since entry $i \in \known{A}$ has $A'[i] = A[i]$
by Theorem \ref{uniquess_A^{re}}, we need only prove uniqueness of any completion $i \in \unknown{A}$.
From the uniqueness result of Theorem \ref{uniquess_A^{re}}, $\TCA(A, k)$ admits two distinct scaling vectors $\coeffsubtensorScaled_k$ and $\coeffsubtensorScaled_k'$ that yield the same, unique, $LLI(A, k)$. From definition \ref{fully-supported-tensor}, there exists $2^D-1$ vector $i'$ such that $i'_d = i_d + \Delta_d$, where $\Delta_d \in \{0, s_d\}$ for some fixed vector $s$. 
From Theorem \ref{uniquess_A^{re}}, the scaling vector $\coeffsubtensorScaled_k'$ equals $\coeffsubtensorScaled_k \circ \coeffsubtensorAdd_k$ is equivalent to
\begin{equation}
        \prod\limits_{i'' \in \Ic:i' \in A[i'']}\hspace{-8pt} \coeffsubtensorAdd_{k, A[i'']}  ~= ~1 \qquad \forall i'.
\end{equation}
We now show that 
\begin{equation}
        \re{A}[i] = \prod_{i'' \in \Ic: i \in A[i'']} \coeffsubtensorScaled_{k,A[i'']}^{-1} = \prod_{i'' \in \Ic: i \in A[i'']} \coeffsubtensorScaled'^{-1}_{k,A[i'']}
\end{equation}
or equivalently from Theorem \ref{uniquess_A^{re}}
\begin{equation}
     \prod\limits_{i'' \in \Ic:i \in A[i'']}\hspace{-8pt} \coeffsubtensorAdd_{k, A[i'']}  ~= ~1 ~.
\end{equation}
Without loss of generality, we consider the case $D \equiv 0$ (mod $2)$, and the other case can be proven similarly. We define two sets $G_0$ and $G_1$ by the following. Except for $i' = i + s$, we divide $2^D-2$ remaining vectors $i'$ into two groups. For each $i' = i + \Delta$, if the number of $\Delta_d = 0$ equals $j$ modulo 2, then $i'$ goes to $G_j$ for $j \in \{0, 1\}$. We also denote $G_j \cap A[i'']= \{i' \in G_j \mid i' \in A[i'']\}$. Then
\begin{equation}
    \prod_{i' \in G_j}~\prod\limits_{i'' \in \Ic:i' \in A[i'']}\hspace{-8pt} \coeffsubtensorAdd_{k, A[i'']} ~~ = ~ 1 ~~\Leftrightarrow ~ \prod_{i'' \in \Ic}\hspace{-2pt} \coeffsubtensorAdd_{k, A[i'']}^{|G_j \cap A[i'']|} ~~=~ 1 ~.
\end{equation}

Consider $i'' \in \Ic$ that has $k$ non-null elements. Let's say that $i''$ has $C$ elements $i''_d = i_d$, then the remaining $k-C$ elements of $i''$ satisfies $i''_d = i_d + s_d$. Consider case 1 when $0 < C < k$. For each $i''$, we form a fixed $i'$ by replacing $m \leq D-k$ null elements of $i''$ such that $m+C$ is the number of elements in vector $i$ such that $i''_d = i_d$. For $j\in \{0,1\}$, if $m +C \equiv j (\text{mod } 2)$, then $i''$ forms $\binom{D-k}{m}$ numbers of $i'$ that belongs to $G_j$. So when we sum between $0 \leq m \leq D - k$, $|G_0 \cap A[i'']| = \sum_{m + C \equiv 0 (\text{mod } 2)} \binom{D-k}{m} = 2^{D-k-1}$ and $|G_1 \cap A[i'']| = \sum_{ m +C \equiv 1 (\text{mod } 2)} \binom{D-k}{m} = 2^{D-k-1}$. Thus,

\begin{equation} 
   \dfrac{\coeffsubtensorAdd_{k, A[i'']}^{|G_1 \cap A[i'']|}}{ \coeffsubtensorAdd_{k, A[i'']}^{|G_0 \cap A[i'']|}} ~=~ 1 ~.
\end{equation}
If $C = k$, we encounter the vector $i$ when forming $i'$. If $C = 0$, we encounter the vector $i + s$ when forming $i'$. Since we omit $i$ and $i+s$ from $G_0$, $|G_0 \cap A[i'']| = 2^{D-k-1} - 1$ and $|G_1 \cap A[i'']| = 2^{D-k-1}$ in either case of $C$. Thus,
\begin{equation} 
    \dfrac{\coeffsubtensorAdd_{k, A[i'']}^{|G_1 \cap A[i'']|}}{ \coeffsubtensorAdd_{k, A[i'']}^{|G_0 \cap A[i'']|}} ~=~ \coeffsubtensorAdd_{k, A[i'']} 
\end{equation}
and therefore
\begin{equation}
    \dfrac{\prod_{i'' \in \Ic} \coeffsubtensorAdd_{k, A[i'']}^{|G_1 \cap A[i'']|}} {\prod_{i'' \in \Ic} \coeffsubtensorAdd_{k, A[i'']}^{|G_0 \cap A[i'']|}} = 1 ~\Rightarrow \prod\limits_{i'' \in \Ic:i \in A[i'']}\hspace{-8pt} \coeffsubtensorAdd_{k, A[i'']}
    \prod\limits_{i'' \in \Ic:i' \in A[i'']}\hspace{-8pt} \coeffsubtensorAdd_{k, A[i'']}\,=  1 ~\Rightarrow \prod\limits_{i'' \in \Ic:i \in A[i'']}\hspace{-8pt} \coeffsubtensorAdd_{k, A[i'']} ~= 1 ~.
\end{equation}
This equality implies that $A'$ is unchanged, and thus uniquely determined.

\end{proof}

\subsubsection{Unit Consistency}
We now describe scale-invariance properties of the $\TCA(A,k)$ process.
\begin{theorem} \textbf{Unit Consistency:}
    Given a tensor $A$ and an arbitrary conformant positive latent vector $\coeffsubtensorScaled \in \Rbb^{|\Ic|}$, we have $\coeffsubtensorScaled *_k \TCA(A, k)\,=\,\TCA(\coeffsubtensorScaled *_k A, k)$.
    \label{unit-consistecy}
\end{theorem}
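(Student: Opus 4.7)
Plan:

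My plan is to verify the claim $Z *_k \TCA(A,k) = \TCA(Z *_k A, k)$ entrywise, separating the observed and unobserved entries. For compactness, write $\phi_j(W) := \prod_{i' \in \Ic:\, j \in A[i']} W_{A[i']}$ for any positive scaling vector $W$, so that the definition of $*_k$ gives $(W *_k A)[j] = \phi_j(W)\cdot A[j]$ for every full-index entry $j$. Let $A' := \TCA(A,k)$ with underlying LLI scaling $Z_k$, set $\tilde{A} := Z *_k A$, and let $\tilde{A}' := \TCA(\tilde{A}, k)$ with LLI scaling $\tilde{Z}_k$. Since $Z$ is strictly positive, observation status is preserved: $\known{\tilde{A}} = \known{A}$ and $\unknown{\tilde{A}} = \unknown{A}$, so both TCA runs partition the full-index entries identically.

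The observed case is immediate: for $j \in \known{A}$, TCA retains known values, hence $\tilde{A}'[j] = \tilde{A}[j] = \phi_j(Z)\cdot A[j]$ while $(Z *_k A')[j] = \phi_j(Z)\cdot A'[j] = \phi_j(Z)\cdot A[j]$, so the two sides agree.

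The unobserved case is the heart of the argument. My plan is to exhibit the natural candidate LLI scaling for $\tilde{A}$, namely $Z_k^\star := Z_k \circ Z^{-1}$, and verify it already satisfies the LLI normalization constraint by the multiplicative associativity of $*_k$:
\begin{equation}
Z_k^\star *_k \tilde{A} \;=\; (Z_k \circ Z^{-1}) *_k (Z *_k A) \;=\; Z_k *_k A,
\end{equation}
which equals the normalized tensor returned by $\text{LLI}(A,k)$. Hence $Z_k^\star$ is a valid LLI scaling for $\tilde{A}$. Substituting $Z_k^\star$ into the TCA completion formula at an unobserved $j$ yields
\begin{equation}
\phi_j(Z_k^\star)^{-1} \;=\; \phi_j(Z)\cdot \phi_j(Z_k)^{-1} \;=\; \phi_j(Z)\cdot A'[j] \;=\; (Z *_k A')[j],
\end{equation}
which is exactly the identity required.

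What remains, and is where I expect the real work, is to reconcile the candidate $Z_k^\star$ with the actual LLI output $\tilde{Z}_k$. By the second part of Theorem~\ref{uniquess_A^{re}}, we may write $\tilde{Z}_k = Z_k^\star \circ T$ for some positive $T$ with $\phi_j(T) = 1$ on $\known{A}$, and the TCA discrepancy at an unobserved $j$ is then exactly $\phi_j(T)^{-1}$. The main obstacle is therefore to extend $\phi_j(T) = 1$ from $\known{A}$ to $\unknown{A}$. This is precisely the parity-of-$\Delta$ and binomial-counting identity over the $2^D-1$ support vectors that drives the proof of the preceding Uniqueness of TCA theorem, so I would import that argument essentially verbatim; note that this step implicitly invokes the same fully-supported structure on $A$.
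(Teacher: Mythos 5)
Your proof takes essentially the same route as the paper's: both identify $Z_k \circ Z^{-1}$ as a valid LLI scaling vector for $Z *_k A$, use the invariance/uniqueness of the normalized tensor $\re{A}$ from Theorem~\ref{uniquess_A^{re}}, and read the identity off the completion formula $\TCA(A,k) = Z_k^{(-1)} *_k \re{A}$. You are in fact more careful than the paper at the last step: where the paper silently ``subsumes'' the non-uniqueness of the LLI scaling vector, you correctly observe that pinning down the completion at unobserved entries requires reconciling your candidate scaling with the actual LLI output via the fully-supported counting argument --- an assumption the theorem statement itself omits.
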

\begin{proof}
    Let $\coeffsubtensorScaled_k \leftarrow LLI(A, k)$. It can be shown that $\re{A} = LLI(\coeffsubtensorScaled_k *_k A, k) = LLI(A, k)$ for all $A$. We assume the unknown entries of $\re{A}$ are assigned the value of 1, i.e., $\re{A}[i] = 1$ for $i \in \unknown{A}$. The complete TCA process can then be defined as $\TCA(A, k) = \coeffsubtensorScaled_{k}^{(-1)} *_k \re{A}$,
    where $\coeffsubtensorScaled_{k}^{(-1)} = [\coeffsubtensorScaled_{k, A_i}^{-1}]_{i \in \Ic}$ is the inverse vector of $\coeffsubtensorScaled_k$. Now, using the uniqueness Theorem~\ref{uniquess_A^{re}}, we can subsume the latent vector $\coeffsubtensorScaled_k$ and deduce that
    \begin{equation}
        \coeffsubtensorScaled *_k \TCA(A, k) =  (\coeffsubtensorScaled_{k}^{(-1)} \circ \coeffsubtensorScaled) *_k \re{A} = \TCA(\coeffsubtensorScaled *_k A, k)
    \end{equation}
\end{proof} 
\subsubsection{Consensus Ordering}

When $k\,$=$\,D$$-$$1$, we establish how ordering from high to low rankings from known entries can be preserved in recommendation of unobserved entries. 
\begin{definition} \textbf{Ordering by $D$ dimension}:
Given a tensor $A$ and a permutation index vector $\gamma$ at dimension $D$ for each $\gamma_d \in [I_D]\cup\emptyset$, we define a set of vectors $\known{\Ic_{\gamma}}$ that preserves/follows ordering $\gamma$ in tensor $A$ if for $i \in \Ic_{\gamma}$ and $i \in \Zbb^{D-1}_{>0}$:
\begin{equation}
\begin{aligned}
    A[i, \gamma_d] \in \known{A}, \forall \gamma_d \neq \emptyset \text{ where } A[i, \gamma_{d_a}] < A[i, \gamma_{d_b}] \text{ when } a < b.
\end{aligned}
\label{consensus}
\end{equation}
Then the set of unobserved vectors $\unknown{\Ic_{\gamma}}$ satisfies that for $i\in\unknown{\Ic_{\gamma}}$,
\begin{equation}
\begin{aligned}
    A[i, \gamma_d] & \in \unknown{A} \quad \forall \gamma_d \neq \emptyset
\end{aligned}
\label{consensus2}
\end{equation}
\end{definition}

\begin{theorem} \textbf{Consensus Ordering:}
    Given a fully-supported tensor $A$, the obtained result $ \re{A} = \TCA(A, D-1)$, permutation index vector $\gamma$. Given $\known{\Ic_{\gamma}} \neq \emptyset$, then any unobserved vector $i \in \unknown{\Ic_{\gamma}}$ would satisfy $ \re{A}[i, \gamma_{d_a}] < \re{A}[i, \gamma_{d_b}] \text{ when } a < b.$
    \label{consensus ordering}
\end{theorem}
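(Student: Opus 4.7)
My plan is to reduce the statement to a single inequality between two scaling factors along the last dimension, and then to derive that inequality from the LLI fixed-point conditions together with the ordering hypothesis.

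\emph{Reduction to a scaling comparison.} With $k = D-1$, every index vector $i' \in \Ic$ has exactly one non-null coordinate, so the $(D{-}1)$-dimensional subtensors are slices obtained by fixing a single coordinate. Write $Z_{(d,j)}$ for the scaling attached to the slice that fixes coordinate $d$ at value $j$. The completion step of Algorithm~\ref{TCA} then gives, for any unobserved entry $(i, \gamma_d)$ with $i \in \Zbb^{D-1}_{>0}$,
\[
   \re{A}[i, \gamma_d] \;=\; \Biggl(\prod_{e=1}^{D-1} Z_{(e,\, i_e)}\Biggr)^{-1}\cdot\, Z_{(D,\,\gamma_d)}^{-1}.
\]
For $i \in \unknown{\Ic_\gamma}$ every $(i, \gamma_d)$ is unobserved, so the $i$-dependent factors cancel in the ratio $\re{A}[i, \gamma_{d_a}]/\re{A}[i, \gamma_{d_b}] = Z_{(D,\, \gamma_{d_b})}/Z_{(D,\, \gamma_{d_a})}$, and the conclusion is equivalent to $Z_{(D,\,\gamma_{d_a})} > Z_{(D,\,\gamma_{d_b})}$ whenever $a<b$.

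\emph{Normalizing a witness row.} Next I would pick a witness $i^* \in \known{\Ic_\gamma}$ (nonempty by hypothesis) and apply Theorem~\ref{unit-consistecy} to rescale dimension $D$: set $c_d := 1/A[i^*, \gamma_d]$ for each non-empty $\gamma_d$ and let $\widetilde A$ be the rescaled tensor, so that $\widetilde A[i^*, \gamma_d] = 1$ uniformly in $\gamma_d$. Unit consistency gives $\TCA(\widetilde A, D-1)[i, \gamma_d] = c_d \cdot \re{A}[i, \gamma_d]$, and since $A[i^*, \gamma_{d_a}] < A[i^*, \gamma_{d_b}]$ forces $c_{d_a} > c_{d_b}$, it suffices to establish only the non-strict inequality $\TCA(\widetilde A, D-1)[i, \gamma_{d_a}] \le \TCA(\widetilde A, D-1)[i, \gamma_{d_b}]$ in the rescaled tensor: the strict inequality in $A$ then follows for free from $c_{d_a} > c_{d_b}$.

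\emph{Where the work is.} The remaining step is to combine the LLI fixed-point equations for the two slices fixing dimension $D$ at $\gamma_{d_a}$ and at $\gamma_{d_b}$, each of which expresses $Z_{(D,\gamma_d)}$ as a log-average of the observed entries in that slice together with the cross-dimension scalings $Z_{(e,\cdot)}$ for $e<D$. Subtracting the two equations, the rows in $\known{\Ic_\gamma}$ contribute terms whose sign is pinned down by the ordering hypothesis, while the remaining observed indices produce residuals that must be shown not to overwhelm that sign. I expect this to be the main obstacle; it should be handled by combining (a) Theorem~\ref{uniquess_A^{re}}, which lets one work with a canonical choice of scalings on the fully-supported tensor, and (b) the local cube structure of Definition~\ref{fully-supported-tensor}, which pairs each unobserved $(i, \gamma_d)$ with $i \in \unknown{\Ic_\gamma}$ to a collection of observed neighbours, so that the cross-dimension contributions from the two slices cancel pairwise outside $\known{\Ic_\gamma}$. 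Carrying out that pairwise cancellation carefully is the technically most delicate step.
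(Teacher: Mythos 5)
There is a genuine gap: the proposal never establishes the one inequality on which the whole theorem hinges. Your opening reduction is correct and coincides exactly with the paper's endgame --- with $k=D-1$ each subtensor is a slice fixing one coordinate, the completion formula gives $\re{A}[i,\gamma_d]=Z_{(D,\gamma_d)}^{-1}\prod_{e<D}Z_{(e,i_e)}^{-1}$ for $i\in\unknown{\Ic_\gamma}$, the row factors cancel in the ratio, and the claim becomes $Z_{(D,\gamma_{d_a})}>Z_{(D,\gamma_{d_b})}$ for $a<b$. But from that point the argument does not close. The witness-rescaling step is logically sound but cosmetic: it converts the target from a strict to a non-strict inequality, while using only a single witness $i^*$ where the hypothesis hands you the ordering on \emph{every} $i\in\known{\Ic_\gamma}$; and your final paragraph explicitly concedes that the derivation of the scaling comparison from the LLI fixed-point equations is unresolved (``I expect this to be the main obstacle''). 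As written, this is a correct proof of the easy half of the statement together with a research plan for the hard half, so it cannot be accepted as a proof.

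The missing step is shorter than your plan suggests, and it does not require the cube structure of Definition~\ref{fully-supported-tensor} or any pairwise-cancellation analysis of residuals. The paper's route: for each \emph{observed} $i\in\known{\Ic_\gamma}$, the LLI decomposition gives $A[i,\gamma_{d_l}]=\re{A}[i,\gamma_{d_l}]\cdot Z_{(D,\gamma_{d_l})}^{-1}\cdot\prod_{e<D}Z_{(e,i_e)}^{-1}$; substituting into the hypothesis $A[i,\gamma_{d_a}]<A[i,\gamma_{d_b}]$ and cancelling the shared (positive) row scalings yields $\re{A}[i,\gamma_{d_a}]\,Z_{(D,\gamma_{d_a})}^{-1}<\re{A}[i,\gamma_{d_b}]\,Z_{(D,\gamma_{d_b})}^{-1}$. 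Multiplying these inequalities over all $i\in\known{\Ic_\gamma}$ and invoking the LLI normalization $\prod\re{A}[\cdot,\gamma_{d_l}]=1$ on each slice strips away the $\re{A}$ factors and leaves $Z_{(D,\gamma_{d_a})}^{-1}<Z_{(D,\gamma_{d_b})}^{-1}$, which is precisely your target inequality. In other words, the ingredient your plan circles around but never deploys is the product-equals-one constraint on each slice; once it is combined with the ordering hypothesis aggregated over the whole of $\known{\Ic_\gamma}$ (not a single witness), the comparison of the two slice scalings falls out directly. (You would still want to note, as the paper's own terse substitution glosses over, that this aggregation step is cleanest when the observed entries of the slices $A[\emptyset,\gamma_{d_l}]$ are exactly those indexed by $\known{\Ic_\gamma}$.)
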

\begin{proof}
    Assuming the ordering for $i \in \known{\Ic_{\gamma}}$, $A[i, \gamma_{d_a}] < A[i,\gamma_{d_b}] \text{ when } a < b.$
    After the LLI process for $ \re{A} = LLI(A, D-1)$,
    \begin{align*}
        A[i, \gamma_{d_l}] =  A'[i, \gamma_{d_l}]  \cdot \coeffsubtensorScaled_{k,A[\emptyset, \gamma_{d_l}]}^{-1} \cdot  \prod_{d=1}^{D-1} \coeffsubtensorScaled_{k, A[\emptyset, i_d, \emptyset]}^{-1}
    \end{align*} 
    Substituting into the above inequality, we have:
    \begin{equation} 
        A'[i, \gamma_{d_a}] \cdot \coeffsubtensorScaled_{k,A[\emptyset, \gamma_{d_a}]}^{-1} \! 
         <\!  A'[i, \gamma_{d_b}]\cdot \coeffsubtensorScaled_{k,A[\emptyset, \gamma_{d_b}]}^{-1} 
       \label{ordering}
    \end{equation}
    Using the constraint from the problem setup:
    \begin{equation}
        \prod_{[i, \gamma_{d_l}] \in  \known{A[\emptyset, \gamma_{d_l}]}}\hspace{-2pt} \re{A}[i, \gamma_{d_l}] ~=~ 1~.
        \label{normalized_proof}
    \end{equation}
    Substituting (\ref{normalized_proof}) into (\ref{ordering}) gives $ \coeffsubtensorScaled_{k,A[\emptyset, \gamma_{d_a}]}^{-1} \! <\! \coeffsubtensorScaled_{k,A[\emptyset, \gamma_{d_b}]}^{-1}$. For any vector $i' \in \unknown{\Ic_{\gamma}} $ and $ \re{A} = \TCA(A, D-1)$, the following entry is uniquely determined,
    \begin{equation}
        \re{A}[i', \gamma_{d_l}] ~=~  \coeffsubtensorScaled_{k,A[\emptyset, \gamma_{d_l}]}^{-1} \cdot
        \prod_{d=1}^{D-1} \coeffsubtensorScaled_{k, A[\emptyset, i'_d, \emptyset]}^{-1},
    \end{equation}
    and we therefore deduce that $\re{A}[i, \gamma_{d_a}] < \re{A}[i, \gamma_{d_b}] \text{ when } a < b,$ thus completing the proof.
\end{proof}

Theorem~\ref{consensus ordering} applies even when we replace ordering at the $D$ dimension by ordering at any lower dimension $d\leq D$. We illustrate how this theorem provides the recommendation procedure from a given dataset with the observed user, product, and feature.

\section{Recommendation System Procedure}
\begin{definition}
    For a given user $u$ and product $p$ and vector $i = [u, p]$, denote $r_{u, p} =  \re{A}[i]$ as the recommendation result  from tensor A with $\re{A} = \TCA(A, D-1)$.
\end{definition}

We state the general recommendation framework from a $3$-dimensional tensor following Theorem~\ref{consensus ordering}. 
\begin{corollary} \textbf{Recommendation procedure for 3D Tensor:}
    Given a tensor $A \in \Rbb^{I_1\times I_2\times I_3}$ and $A' = \TCA(A, k=2)$. We retrieve $r_{u,p}$ by the following: Give permutation index vector of the features of users $\gamma_f \subset [I_2]$ and the non-empty set $\known{\Ic_{\gamma_f}}$, then any user-product vector $[u,p]\in \known{\Ic_{\gamma_f}}$ satisfies
        \begin{equation}
            \re{A}[u, \gamma_{f_a}, p] < \re{A}[u, {\gamma_{f_b}, p}] \text{ when } a < b.
        \end{equation}
        From here we could provide rating recommendation of user $u$ and product $p$ through maximization projection based on user's feature.
        \begin{equation}
            r_{u,p} = \max{(\re{A}[u,\emptyset,p])}
        \end{equation}
\label{RS3D}
\end{corollary}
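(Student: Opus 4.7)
The plan is to apply Theorem \ref{consensus ordering} directly with $D=3$ and $k = D-1 = 2$, invoking the remark immediately following that theorem which extends the conclusion to ordering along any dimension $d \leq D$, not only the last one. I identify the three modes of $A$ as users (dimension $1$), features (dimension $2$), and products (dimension $3$), so that $\gamma_f$ is a permutation over the feature axis rather than over the final axis as stated in the theorem.

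First I would verify the hypotheses in this setting. The tensor $A$ is assumed fully supported, and $\known{\Ic_{\gamma_f}}$ is assumed nonempty, so along every observed user-product vector $[u,p]$ the hypothesis $A[u, \gamma_{f_a}, p] < A[u, \gamma_{f_b}, p]$ holds whenever $a < b$. Applying consensus ordering with ordering moved from dimension $D$ to dimension $2$ (a relabeling of axes that leaves both the LLI update and the TCA completion formula invariant), I conclude that for any unobserved vector $[u,p] \in \unknown{\Ic_{\gamma_f}}$, the completed tensor satisfies $\re{A}[u, \gamma_{f_a}, p] < \re{A}[u, \gamma_{f_b}, p]$ when $a < b$. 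For $[u,p]$ whose feature slice is observed, Theorem \ref{uniquess_A^{re}} guarantees $\re{A}[u, \gamma_{f_l}, p] = A[u, \gamma_{f_l}, p]$ (up to the uniqueness normalization), so the original ordering is preserved unchanged. Combining the two cases yields the first stated inequality for any $[u,p] \in \known{\Ic_{\gamma_f}}$.

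Second, the recommendation rule $r_{u,p} = \max(\re{A}[u, \emptyset, p])$ then follows directly: the slice $\re{A}[u, \emptyset, p]$ is the vector of completed ratings that user $u$ would assign to product $p$ across all feature values, and by the consensus ordering just established this vector is monotone with respect to the permutation $\gamma_f$. Hence its maximum entry coincides with the feature value that yields the highest completed score for the pair $[u,p]$, which is the natural aggregate recommendation score assigned to $(u,p)$.

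The main obstacle is purely notational rather than substantive: one must carefully verify that the proof of Theorem \ref{consensus ordering}, which is written for ordering at the final dimension, continues to go through when the ordered axis is relabeled as the feature axis. Since both the LLI iterative update and the TCA completion rule treat the tensor modes symmetrically, a relabeling of dimensions leaves every step of the argument unchanged, so the extension is immediate and the corollary reduces to a specialization of the theorem plus the definition of $r_{u,p}$.
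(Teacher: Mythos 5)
Your proposal matches the paper's intended derivation: the corollary is presented as a direct specialization of Theorem~\ref{consensus ordering} with $D=3$, $k=2$, using the remark immediately after that theorem that the ordering may be taken along any dimension $d \le D$ (here the feature axis), with the $\max$ rule serving as the definition of $r_{u,p}$. The only point worth flagging is that, like the paper, the corollary's statement omits the full-support hypothesis that Theorem~\ref{consensus ordering} requires; you correctly reinstate it.
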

Using Corollary~\ref{RS3D}, we obtain the method to retrieve the recommendation result of a user on a product by looking at the features that maximize their rating. This is remarkable since the proposed RS method comes from rigorous framework; \cite{CO1, CO2} have demonstrated the importance of using users' features for recommendation. Using Theorem~\ref{consensus ordering}, we derive the consensus ordering property in the context of a $2$-dimensional tensor and a $3$-dimensional tensor.

\begin{corollary} \textbf{Consensus Ordering for 2D tensor:}
    Given a 2D Tensor $A \in \Rbb^{m \times n}$ and $\MCA(A)$ with permutations index vectors of users $\gamma_u \subset [m]$ and of products $\gamma_p \subset [n]$. The following statements showcase the ranking consistency with respect to either user or product:
    \begin{enumerate}
        \item Given that $\known{\Ic_{\gamma_p}}$ is non-empty. Then any unobserved user $u \in \unknown{\Ic_{\gamma_p}}$ satisfies
        \begin{equation}
            r_{u, \gamma_{p_a}} < r_{u, \gamma_{p_b}} \text{ when } a < b.
        \end{equation}
        From here we could provide prediction on what product user $u$ will prefer/not prefer.
        \item Given that $\known{\Ic_{\gamma_u}}$ is non-empty. Then any unobserved product $p \in \unknown{\Ic_{\gamma_u}}$ satisfies
        \begin{equation}
            r_{\gamma_{u_a}, p} < r_{\gamma_{u_b}, p} \text{ when } a < b.
        \end{equation}
        From here we could provide prediction on which user will prefer/not prefer a product $p$.
    \end{enumerate}
    \label{CO2D Tensor}
\end{corollary}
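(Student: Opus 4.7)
The plan is to specialize Theorem~\ref{consensus ordering} to the two-dimensional setting, where $\MCA(A) = \TCA(A, 1) = \TCA(A, D-1)$, and to invoke the remark following that theorem which permits the ordered dimension to be any $d \leq D$ rather than only the last. Both parts of the corollary should fall out as direct instantiations, and the only substantive bookkeeping is matching indices.

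For statement~(1), I would take the ordered dimension to be the product dimension (the second, i.e.\ the $D$-th, since $D=2$). The hypothesis that $\known{\Ic_{\gamma_p}}$ is non-empty with $A[u,\gamma_{p_a}] < A[u,\gamma_{p_b}]$ for $a<b$ on observed rows matches the hypothesis of Theorem~\ref{consensus ordering} verbatim with $k = D-1 = 1$. Applying that theorem gives $\re{A}[u,\gamma_{p_a}] < \re{A}[u,\gamma_{p_b}]$ for every $u \in \unknown{\Ic_{\gamma_p}}$, and since $r_{u,p} = \re{A}[u,p]$ by definition, the desired inequality is immediate.

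For statement~(2), the argument is structurally symmetric. I would invoke the remark after Theorem~\ref{consensus ordering} that permits ordering at any lower dimension $d \leq D$, applied with $d=1$; an equivalent route is to observe that $\MCA$ is transpose-symmetric, i.e.\ $\MCA(A^{\top}) = \MCA(A)^{\top}$, because in the 2D LLI iteration rows (column-subtensors) and columns (row-subtensors) play completely interchangeable roles. Either viewpoint reduces statement~(2) to statement~(1) applied to $A^{\top}$, with $\gamma_u$ playing the role of $\gamma_p$.

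The main obstacle is justifying the extension of Theorem~\ref{consensus ordering} from the $D$-th dimension to a general dimension $d$; the text asserts this without proof. The verification is light but needs to be made explicit: the recurrence in Algorithm~\ref{LLI} treats every $k$-dimensional subtensor uniformly, and the completion formula in Algorithm~\ref{TCA} is symmetric in its indices, so permuting the dimensions of the input simply permutes the dimensions of the output, and scaling vectors $\coeffsubtensorScaled_{k,A[\emptyset,\gamma_{d_a}]}$ can be re-interpreted along any chosen axis. Once that symmetry is recorded, both items of the corollary are immediate specializations of Theorem~\ref{consensus ordering}.
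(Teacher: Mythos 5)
Your proposal matches the paper's own (implicit) derivation: the paper gives no separate proof of this corollary, presenting it as a direct specialization of Theorem~\ref{consensus ordering} with $D=2$, $k=1$ for statement~(1) and invoking the unproved remark that the ordered dimension may be any $d \leq D$ for statement~(2), exactly as you do. Your explicit justification of that dimension-permutation symmetry (equivalently, transpose symmetry of $\MCA$) fills in a step the paper merely asserts, and the only hypothesis worth flagging is that Theorem~\ref{consensus ordering} assumes $A$ is fully supported while the corollary's statement silently omits it.
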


\begin{corollary} \textbf{Consensus Ordering for 3D Tensor:}
Given a tensor $A \in \Rbb^{I_1\times I_2\times I_3}$ and $A' = \TCA(A, k=2)$. The following statements showcase the equivalency with corollary~\ref{CO2D Tensor}.
    \begin{enumerate}
        \item Give permutation index vector of products $\gamma_p \subset [I_3]$ and non-empty set of observed user-feature vector $\known{\Ic_{\gamma_p}}$. Then each unobserved user vector $u  \in \unknown{\Ic_{\gamma_p}}$ satisfies:
        \begin{equation}
            r_{u, \gamma_{p_a}} < r_{u, \gamma_{p_b}} \text{ when } a < b.
        \end{equation}
        \item Give permutation index vector of users $\gamma_u \subset [I_1]$ and non-empty set of observed product-feature vector $\known{\Ic_{\gamma_u}}$. Then each unobserved user vector $p \in \unknown{\Ic_{\gamma_u}}$ satisfies:
        \begin{equation}
            r_{\gamma_{u_a}, p} < r_{\gamma_{u_b}, p} \text{ when } a < b.
        \end{equation}
    \end{enumerate}
\label{COtensor}
\end{corollary}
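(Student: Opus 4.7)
The plan is to derive both parts of the corollary as direct applications of Theorem~\ref{consensus ordering} in the regime $D=3$, $k=D-1=2$, combined with the scalar recommendation score $r_{u,p}=\max(\re{A}[u,\emptyset,p])$ established in Corollary~\ref{RS3D}, and the observation immediately following Theorem~\ref{consensus ordering} that the consensus ordering conclusion extends from dimension $D$ to any lower dimension $d \leq D$.

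For part (1), I would label dimension 1 with users, dimension 2 with features, and dimension 3 with products. Theorem~\ref{consensus ordering} applied with permutation index vector $\gamma_p$ on dimension 3 then yields, for every user--feature pair $[u,f]$ lying in $\unknown{\Ic_{\gamma_p}}$ (with the corresponding observed pairs in the non-empty $\known{\Ic_{\gamma_p}}$ ordered by $\gamma_p$), the entrywise strict inequality
\begin{equation*}
\re{A}[u,f,\gamma_{p_a}] \;<\; \re{A}[u,f,\gamma_{p_b}],\qquad a<b.
\end{equation*}
To pass from this 3D tensor inequality to an inequality on the scalar ratings $r_{u,p}$, I would take the maximum over the feature index $f$ on both sides. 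The elementary fact needed is that if $\alpha(f)<\beta(f)$ pointwise over a common index set, then $\max_f \alpha(f) < \max_f \beta(f)$, which is seen by evaluating the right-hand side at the argmax of $\alpha$. Applying this to $\alpha(f)=\re{A}[u,f,\gamma_{p_a}]$ and $\beta(f)=\re{A}[u,f,\gamma_{p_b}]$, together with the definition $r_{u,p}=\max_f\re{A}[u,f,p]$ from Corollary~\ref{RS3D}, yields the desired $r_{u,\gamma_{p_a}}<r_{u,\gamma_{p_b}}$.

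For part (2), I would invoke the remark after Theorem~\ref{consensus ordering} that the ordering conclusion is valid at any dimension $d \leq D$, and repeat the same argument with the permutation vector $\gamma_u$ placed on the user dimension (dimension 1) instead of on the product dimension. This yields the analogous entrywise inequality across users for every fixed feature--product pair, and the max projection over features transfers this into the scalar ordering $r_{\gamma_{u_a},p}<r_{\gamma_{u_b},p}$ by the identical reasoning used in part (1).

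The main technical point in the plan is the transition from the entrywise inequality on $\re{A}$ to the scalar inequality on $r_{u,p}$: the strict inequality from Theorem~\ref{consensus ordering} must hold uniformly over those features $f$ with $[u,f]\in\unknown{\Ic_{\gamma_p}}$ for the max (restricted to those features) to inherit strictness. Positivity of all entries (assumed in the set-up) and the strictness supplied by Theorem~\ref{consensus ordering} make this step straightforward, so there is no substantive obstacle beyond careful notational bookkeeping between the abstract $(D-1)$-vector framework of Theorem~\ref{consensus ordering} and the user/feature/product labeling of the recommendation setting.
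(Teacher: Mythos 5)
Your proposal is correct and follows essentially the same route as the paper: the paper gives no separate proof of Corollary~\ref{COtensor}, treating it as a direct specialization of Theorem~\ref{consensus ordering} to $D=3$, $k=D-1=2$ with the permutation placed on the product dimension (part 1) or, via the remark after the theorem, on the user dimension (part 2). The only substantive detail you add --- passing from the entrywise inequality $\re{A}[u,f,\gamma_{p_a}]<\re{A}[u,f,\gamma_{p_b}]$ to the scalar ordering of $r_{u,p}=\max_f \re{A}[u,f,p]$ by the pointwise-max argument, with the max restricted to those features $f$ for which the pair lies in $\unknown{\Ic_{\gamma_p}}$ --- is left implicit in the paper, and your handling of it (including the observation that the unrestricted max over all features would not obviously inherit the strict ordering) is sound.
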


\section{Experiment}
\subsection{2D Tensor Completion}
We experiment on three datasets: MovieLens10M, MovieLens1M, and Jester-2. Each of the two datasets MovieLens10M and MovieLens1M contains user ratings of films, along with features/attributes of users, and Jester 2 is a dataset containing user ratings of jokes. This permits both 2D and 3D tensor formulations of the latent invariance tensor completion problem, with user features representing the third tensor component. For this dataset, the observed ratings are in the interval $[1,5]$. The Jester-2 dataset provides ratings of jokes from a scale of $[-10,10]$, and we convert the data by a translational shift from the smallest rating among all users to obtain positive entries.

For each dataset, the ratio of training data to testing data is 4:1. The only hyperparameter is the stopping rate $\epsilon = 10^{-10}$, which only affects convergence accuracy. We incorporate the training data into a tensor $A$ and retrieve the recommendation through latent vectors. Specifically, we run throughout $MCA$ algorithm to obtain two latent vectors $\coeffsubtensorScaled_1$ and $\coeffsubtensorScaled_2$, from which a query seeking a recommendation of a specified product for a specified user can be obtained in $O(1)$ time.  Purely for consistency with conventions of previous methods, we use the root-mean-square error (RMSE) and mean absolute error (MAE) to evaluate the accuracy of our model. We compare our results with the well-established benchmarks: SVD, SVD with implicit ratings (SVD++), collaborative filtering (Co-Clustering, Slope One, and Normal Predictor), k-nearest neighbors (KNN) algorithms - KNNBasic, KNNwithMean, KNNZscore, and KNNBaseline - and non-negative matrix factorization (NMF). For KNNs, we set the number of neighbors as $k=25$ with the minimum threshold as $k=5$, with nearest neighbors determined using cosine similarity. We use a 5-fold cross-validation method to retrieve the mean and standard deviation. More details on our experimental set-up and codebase can be found at \href{https://github.com/tungnguyen1234/LLI}{https://github.com/tungnguyen1234/LLI}. 

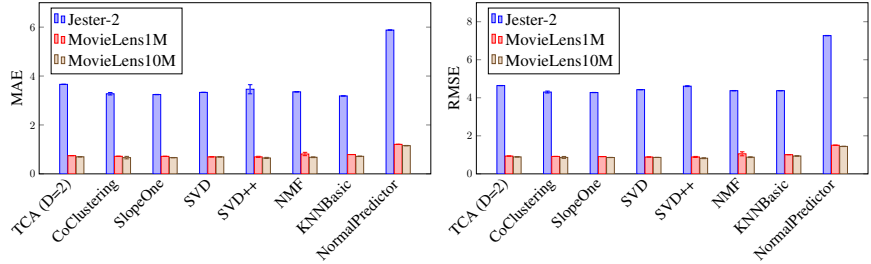
\begin{figure} \centering
    \resizebox{0.41\textwidth}{!}{
    \begin{tikzpicture}
    \begin{axis}[
        legend pos=north west,
        x tick label style={rotate=45, anchor=east, font=\Large},
        label style={font=\Large},
        width  = \textwidth,
        height = 7cm,
        major x tick style = transparent,
        ybar=1*\pgflinewidth,
        bar width=7pt,
        ymajorgrids = false,
        ylabel = {MAE},
        symbolic x coords={TCA (D=2), CoClustering, SlopeOne, SVD, SVD++, NMF, KNNBasic, NormalPredictor},
        xtick = data,
        scaled y ticks = false,
        axis line style={-},
        ymin=0,ymax=7,
        legend cell align=left,
        legend style={nodes={scale=1.5, transform shape}},
    ]
    \addplot+[error bars/.cd, y dir=both, y explicit]
         coordinates {
         (TCA (D=2), 3.6609) +- (0, 0.0045)
         (CoClustering, 3.2761) +- (0, 0.0497)
         (SlopeOne, 3.24346) +- (0, 0.00583)
         (SVD, 3.33185) +- (0, 0.00369)
         (SVD++, 3.4600) +- (0, 0.1868)
         (NMF, 3.35565) +- (0, 0.01075)
         (KNNBasic, 3.18679) +- (0, 0.00319)
         (NormalPredictor, 5.88031) +- (0, 0.00771)};

    \addplot+[error bars/.cd, y dir=both, y explicit]
       coordinates {
       (TCA (D=2), 0.745) +- (0, 0.0011)
       (CoClustering, 0.718) +- (0, 0.0016)
       (SlopeOne, 0.71621) +- (0, 0.00046)
       (SVD, 0.69619) +- (0, 0.00092)
       (SVD++, 0.6954) +- (0, 0.02298)
       (NMF, 0.8122) +- (0, 0.06803)
       (KNNBasic, 0.78978) +- (0, 0.00066)
       (NormalPredictor, 1.20675) +- (0, 8e-05)};
       
        \addplot+[error bars/.cd, y dir=both, y explicit]
     coordinates {(TCA (D=2), 0.69822) +- (0, 0.0052)
        (CoClustering, 0.664) +- (0, 0.051)
         (SlopeOne,0.6636) +- (0, 0.00595)
         (SVD, 0.694) +- (0, 0.003)
         (SVD++, 0.651) +- (0, 0.02298)
         (NMF, 0.6786) +- (0, 0.015)
         (KNNBasic, 0.717) +- (0, 0.00925)
         (NormalPredictor, 1.1543) +- (0, 0.00809)};
    
    \legend{Jester-2, MovieLens1M, MovieLens10M}
    \end{axis}
    \end{tikzpicture}
    }
    \resizebox{0.41\textwidth}{!}{
    \begin{tikzpicture}
        \begin{axis}[
        legend pos=north west,
        x tick label style={rotate=45, anchor=east, font=\Large},
        label style={font=\Large},
        width  = \textwidth,
        height = 7cm,
        major x tick style = transparent,
        ybar=1*\pgflinewidth,
        bar width=7pt,
        ymajorgrids = false,
        ylabel = {RMSE},
        symbolic x coords={TCA (D=2), CoClustering, SlopeOne, SVD, SVD++, NMF, 
        KNNBasic, 
        NormalPredictor},
        xtick = data,
        scaled y ticks = true,
        axis line style={-},
        ymin=0,ymax=9,
        legend cell align=left,
        legend style={nodes={scale=1.5, transform shape}},
    ]
        \addplot+[error bars/.cd, y dir=both, y explicit]
         coordinates {(TCA (D=2), 4.6422) +- (0, 0.0052)
            (CoClustering, 4.302) +- (0, 0.051)
             (SlopeOne, 4.27626) +- (0, 0.00595)
             (SVD, 4.42743) +- (0, 0.003)
             (SVD++, 4.6105) +- (0, 0.02298)
             (NMF, 4.37414) +- (0, 0.00925)
             (KNNBasic, 4.37414) +- (0, 0.00925)
             (NormalPredictor, 7.26971) +- (0, 0.00809)};
        \addplot+[error bars/.cd, y dir=both, y explicit]
         coordinates {
           (TCA (D=2), 0.9362)  +- (0, 0.001)
           (CoClustering, 0.916) +- (0, 0.00212)
           (SlopeOne, 0.90823) +- (0, 0.00049)
           (SVD, 0.8857) +- (0, 0.00119)
           (SVD++, 0.8895) +- (0, 0.02715)
           (NMF, 1.05451) +- (0, 0.1057)
           (KNNBasic, 1.00927) +- (0, 0.0007)
           (NormalPredictor, 1.50569) +- (0, 0.00022)};
\addplot+[error bars/.cd, y dir=both, y explicit]
         coordinates {(TCA (D=2), 0.893) +- (0, 0.0052)
            (CoClustering, 0.864) +- (0, 0.051)
             (SlopeOne, 0.8603) +- (0, 0.00595)
             (SVD, 0.864) +- (0, 0.003)
             (SVD++,0.822) +- (0, 0.02298)
             (NMF, 0.876) +- (0, 0.02)
             (KNNBasic, 0.9404) +- (0, 0.00925)
             (NormalPredictor, 1.443371) +- (0, 0.00809)};
        
        \legend{Jester-2, MovieLens1M, MovieLens10M}
    \end{axis}
\end{tikzpicture}
    }
    \caption{TCA when $D =2$, SVD, and other unitarily-invariant optimization methods implicitly minimize RMSE. The TCA approach yields results comparable to the state-of-the-art on the standard MovieLens-10M, MovieLens-1M, and Jester-2 benchmark dataset according to RMSE and MAE. }
    \label{RMSE and MAE dim 2}
\end{figure}

\subsection{3D Tensor Completion}
In this section we test $TCA(k=2)$ for $D=3$ on MovieLens1M dataset. The tensor has coordinates user, product, and user feature, and each respective triplet of user, product, and user's features corresponds to a rating score in the tensor. As there are 3 categorical features: age, occupation, and gender; we evaluate all 6 possible combinations of including either one, two, or all features in the feature dimension. In this dataset, the gender category has $2$ binary indices representing \textit{male} and \textit{female} and $20$ indices from $0$ to $20$ representing the occupation category (or \textit{occup} for short). As the dimension of age is from $0$ to $56$, we did stratification to group the user into 6 age groups, where group $i$ has the age range from $i \times 10$ to $(i+1) \times 10$.  The procedure can be described through the following example: assuming $TCA(A, 2)$ with tensor $A$ having feature dimension as age and gender, user 1 rates a score $4$ on product $2$, and user 1 is $male$ with index $1$ and is in \textit{age group 5}, then $A[u_1, p_2, \textit{male}] = 4$ and $A[u_1, p_2, \textit{age group 5}] = 4$. 

We construct a full 3-dimensional tensor with observed entries and then divided the entries into both a training tensor and a testing mask with a ratio of 4:1. As in the 2D case, the only parameter is the stopping rate as $\epsilon = 10^{-10}$. We run the $TCA(A, 2)$ to get the three latent vectors and construct a prediction tensor through projecting on a testing mask. Using Corollary~\ref{RS3D}, we perform a maximizing projection on the feature dimension based on the prediction tensor to obtain the predicted 2-dimensional RS for MovieLens1M.

\section{Results and Discussion}

The result for MovieLens1M, MovieLens10M ($D=2, D=3$), and Jester-2 ($D=2$) datasets are in Figure~\ref{RMSE and MAE dim 2} and Figure~\ref{RMSE and MAE}, respectively. In Figure~\ref{RMSE and MAE dim 2}, TCA gives comparable RMSE values of \textbf{0.893} for MovieLens10M, \textbf{0.936} for MovieLens1M, and \textbf{4.642} for Jester-2. This is significant because the competing methods are implicitly or explicitly designed to minimize squared error. Our approach, by contrast, is not designed to minimize squared error, and yet it performs nearly identically to methods that are tailored to minimize this measure. Our approach, by contrast, is not designed to minimize squared error, and yet it performs nearly identically to methods that are tailored to minimize this measure. In Figure~\ref{RMSE and MAE}, all MAE/RMSE values are comprable in each cases. In MovieLens1M, the TCA has comparable metric results with the MCA and and with other benchmarks. Additionally, Figure~\ref{mean_convg_rate} shows the fast convergence rate of the TCA: only 40 iterations over more than 1 million data points. If the conclusions of our analyses are correct, we should expect our latent invariance approach to be similarly competitive with methods that are tailored to minimize other measures of error -- {\em even according to those measures for which they are tailored to minimize}. 
 
In Figure~\ref{RMSE and MAE}, even though adding additional features is expected to yield more accurate predictions, one possibility is that the inherently arbitrary nature of MAE and RMSE as overall performance metrics contributes to the numerical variation. One possibility is to improve the testing retrieval process in the TCA to obtain better MAE/RMSE values. Nevertheless, our results provide evidence that unit consistency relates to something fundamental about the RS problem. Specifically, it achieves comparable performance with well-known benchmark methods without the need for any problem-specific hyperparameters.

\begin{figure}[t] \centering
    \begin{minipage}{0.45\textwidth}
        \begin{tikzpicture}
            \begin{axis}[ 
            width=\textwidth,
            line width=0.5,
            tick label style={font=\normalsize},
            legend style={nodes={scale=0.7, transform shape}},
            label style={font=\normalsize},
            grid style={white},
            height = 5cm,
            major x tick style = transparent,
            xlabel={Number of iterations},
            ylabel={Mean convergence rate},
            legend cell align=left,
      ]
        \addplot+[blue, very thick,  mark = o, error bars/.cd, y dir=both,y explicit] 
        coordinates{
            (5, 0.00024331205349881202) +- (0, 1.1818985512945801e-05)
            (6, 1.423022877133917e-05) +- (0, 6.906965950292943e-07)
            (7, 8.34212528388889e-07)  +- (0, 4.026579603078062e-08)
            (8, 4.893850658049814e-08)  +- (0, 2.3466890652912298e-09) 
            (9, 2.8707034527286623e-09) +- (0, 1.3650675034782012e-10)
            (10, 1.6893886289892635e-10) +- (0, 8.57475462162638e-12)
            (11, 1.0017786079841162e-11) +- (0, 4.625210262530766e-13)
        };
    
        \addplot+[green, very thick, mark = o, error bars/.cd, y dir=both,y explicit] 
        coordinates{
            (5, 0.003619122551754117) +- (0, 0.0002807920682244003)
            (6, 0.0010741017758846283) +- (0, 0.0002442183322273195)
            (7, 0.00038859128835611045) +- (0, 0.00020905723795294762)
            (8, 0.0001772734394762665) +- (0, 0.00015815903316251934)
            (9, 0.00010018399188993499) +- (0, 0.00011619772703852504)
            (10, 6.543874769704416e-05) +- (0, 8.565444295527413e-05)
            (11, 4.634271317627281e-05) +- (0, 6.393214425770566e-05)
            (15, 1.54856334120268e-05) +- ((0, 6.393214425770566e-05)
            (20, 4.663524578063516e-06) +- (0, 6.70756662657368e-06)
            (25, 1.4589452348445775e-06) +- (0, 2.098558070429135e-06)
            (30, 4.5993076014383405e-07) +- (0, 6.615681513721938e-07)
            (35, 1.452209232866153e-07) +- (0, 2.0888695928533707e-07)
        };

         \addplot+[red, very thick,mark = o,  error bars/.cd, y dir=both,y explicit] 
         coordinates{
            (5, 0.0038550831377506256) +- (0, 0.00011318801261950284)
            (6, 0.0010184949496760964) +- (0, 3.607536928029731e-05)
            (7, 0.00029271384119056165) +- (0, 1.2017482731607743e-05)
            (8, 8.844926924211904e-05) +-  (0, 4.124018687434727e-06)
            (9, 2.7741709345718846e-05) +- (0, 1.4408767583518056e-06)
            (10, 8.963429536379408e-06) +- (0, 5.088106149742089e-07)
            (11, 2.9654797799594235e-06) +- (0, 1.8079155950090353e-07)
            (12, 9.99568669612927e-07) +- (0, 6.448735945241424e-08)
            (13, 3.4180561669927556e-07) +- (0, 2.3075083177559463e-08)
            (14, 1.1816837286460213e-07) +- (0, 8.27387669488644e-09)
            (15, 4.1202429912345906e-08) +- (0, 2.9763065345633777e-09)
            (16, 1.445121711185493e-08) +- (0, 1.068303001616755e-09)
            (17, 5.09295805528609e-09) +-  (0, 3.8805941793285115e-10)
         };
         \addplot+[purple, very thick, mark = o,  error bars/.cd, y dir=both,y explicit] 
         coordinates{
            (5, 0.001366131822578609) +- (0, 6.221856165211648e-05)
            (6, 0.0003442417364567518) +- (0, 2.063162719423417e-05)
            (7, 9.784934809431434e-05) +- (0, 6.926702553755604e-06)
            (8, 2.9499402444344014e-05) +- (0, 2.380326122874976e-06)
            (9, 9.248193236999214e-06) +- (0, 8.318511959259922e-07)
            (10, 2.9878663099225378e-06) +- (0, 2.937594842933322e-07)
            (11, 9.88496594800381e-07) +- (0, 1.043799215949548e-07)
            (12, 3.3318954706373916e-07) +- (0, 3.723179275993971e-08)
            (13, 1.1393520793490097e-07) +- (0, 1.332240540108387e-08)
            (14, 3.938945880577194e-08)+- (0, 4.776925077720762e-09)
            (15, 1.3734143600174775e-08) +- (0, 1.7183713341140106e-09)
            (20, 7.569454091305161e-11) +- (0, 1.7183713341140106e-09)
            (25, 4.863150593337195e-07) +- (0, 1.0472298375696631e-11)
            (30, 1.5331025338127802e-07) +- (0, 1.2116030347897322e-06)
            (35, 4.840697442887176e-08) +- (0, 3.8195653928596585e-07)
         };

        \legend{1 feature, 2 features, 3 features, Overall mean}
        \end{axis}
        \end{tikzpicture}
        \caption{Convergence rate for the $D=3$ latent learning method, starting at step 5 for each category of features added.}
        \label{mean_convg_rate}
    \end{minipage}
    \hspace{2mm}
    \begin{minipage}{0.45\textwidth}
        \begin{tikzpicture}
        \begin{axis}[
            legend pos=north west,
            width=\textwidth,
            height=5cm,
            major x tick style=transparent,
            ybar=1*\pgflinewidth,
            bar width=7pt,
            ymajorgrids=false,
            xtick=data,
            ymin=0,
            ymax=1.6,
            ylabel={Performance metrics},
            xlabel={Feature combination},
            ]
            \addplot+[error bars/.cd, y dir=both, y explicit]
                coordinates {
                    (1, 0.736860) +- (0, 0.000428)
                    (2, 0.73686) +- (0, 0.00075)
                    (3, 0.73693) +- (0, 0.00075)
                    (4, 0.736135) +- (0, 0.00079)
                    (5, 0.736101) +- (0, 0.000525)
                    (6, 0.736201) +- (0,  0.00025)
                    (7, 0.735900) +- (0, 0.000726)
                };
            \addplot+[error bars/.cd, y dir=both, y explicit]
                coordinates {
                    (1, 0.92472) +- (0, 0.00018)
                    (2, 0.92483) +- (0, 0.00028) 
                    (3, 0.924896) +- (0,0.00028)
                    (4, 0.92387) +- (0, 0.00028)
                    (5, 0.92386) +- (0,  0.00017)
                    (6, 0.923934) +- (0, 0.00012)
                    (7, 0.923578) +- (0, 0.00038)
                };
            \legend{RMSE, MAE}
        \end{axis}
        \end{tikzpicture}
        \caption{Performance TCA when $D=3$. 1: occup, 2: age, 3: gender, 4: age-occup, 5: gender-occup, 6: age-gender, 7: all features.}
        \label{RMSE and MAE}
    \end{minipage}
\end{figure}

\section{Conclusion and Future Work}
In summary, we provide a tensor decomposition framework that utilizes a convex optimization model based on unit consistency to retrieve robust latent vectors for determining RS recommendations without the need for any hyperparameters. Our empirical results indicate high accuracy and low errors on standard datasets and surpass the performance of state-of-the-art benchmark methods. A topic of further research would be to extend our model into the probabilistic landscape of RS. Future work will examine implicit feedback involving information about user behaviors associated with the giving scores, e.g., measures of the number of clicks or duration of focus while watching a video.  

\bibliography{references}
\end{document}